\documentclass[letterpaper,12pt]{article}
\usepackage{tabularx} 
\usepackage{amsmath}  
\usepackage{graphicx} 
\usepackage[margin=1in,letterpaper]{geometry} 
\usepackage{cite} 
\usepackage{authblk}
\usepackage{amssymb}
\usepackage{amsthm}
\providecommand{\keywords}[1]{\textbf{Keywords--} #1}
\newtheorem{theorem}{Theorem}
\newtheorem{definition}{Definition}
\newtheorem{lemma}{Lemma}

\DeclareRobustCommand{\rchi}{{\mathpalette\irchi\relax}}
\newcommand{\irchi}[2]{\raisebox{\depth}{$#1\chi$}} 

\title{Entropy and Compression: A simple proof of an inequality of Khinchin-Ornstein-Shields}
\author[1]{R. Aragona\thanks{
{riccardo.aragona@univaq.it} - phone number: +390862434723 (Riccardo Aragona)}}
\author[1]{F. Marzi\thanks{
{francesca.marzi@univaq.it} (Francesca Marzi)}}
\author[1,2]{F. Mignosi\thanks{
{filippo.mignosi@univaq.it} (Filippo Mignosi)}}
\author[3]{M. Spezialetti\thanks{
{matteo.spezialetti@guest.univaq.it} (Matteo Spezialetti)\protect\vspace{2mm}\\
\indent \,\,\, R. Aragona is member of INdAM-GNSAGA (Italy).}}
\affil[1]{DISIM, University of L'Aquila\protect\\
Via Vetoio, I-67100 Coppito (L'Aquila), Italy
}
\affil[2]{ICAR-CNR\protect\\
Via Ugo la Malfa 153, 90146 Palermo, Italy
}
\affil[3]{University of Naples ``Federico II''\protect\\
Via Claudio 21, 80125 Napoli, Italy
}

\date{}

\begin{document}

\maketitle

\abstract{This paper concerns the folklore statement that ``entropy is a lower bound for compression''. More precisely we derive from the entropy theorem a simple proof of a pointwise inequality firstly stated by Ornstein and Shields and which is the almost-sure  version  of an average inequality firstly stated by Khinchin in 1953. We further give an elementary  proof of original Khinchin inequality that can be used as an exercise for Information Theory students and we conclude by giving  historical and technical notes of such inequality.}

\vspace{2mm}
\noindent
\keywords{Ergodic sources,  Entropy,  Lossless Data Compression, One-to-One Code Sequence, Shannon-McMillan Theorem.}

\vspace{2mm}

\hfill{\small{\bf{In memoriam of Professor Aldo de Luca}}}

\section{Introduction and notation}\label{sec:intro}This paper concerns the folklore statement that ``entropy is a lower bound for compression''. Whilst almost every expert of the field knows this statement, both the history and the most general mathematical formulation of it, that is an inequality due to Ornstein and Shields in 1990 \cite{OrnsteinShields90}, are much less known. 
The main objective of this note is to give a simple proof of the 
Ornstein-Shields inequality. Such a simple proof, analogously as other proofs of same result known in literature, assumes that the well-known Shannon-McMillan-Breiman Theorem holds. The  name of this latter theorem  refers to the \emph{entropy 
theorem} for three different kinds of convergence.  Since we would like to get straight to the point of the matter, in this section we simply introduce 
the notation and main results that will be used in the following and then we will dedicate a whole section to a historical survey for this inequality that we consider one of the main contributions of this paper. We just point out here that Shields in \cite[Section II.$1$]{Shieldsbook96} gives two proofs of the Ornstein and Shields inequality: one of them assumes, as we do, that the entropy theorem holds; the second proof is quite long but it uses no previous deep result. Shields in his book also shows that the entropy theorem can be easily deduced from both the 1990 Ornstein and Shields inequality together with its complementary result that claims that universal codes reaching the entropy exist.

\medskip

For any notation not explicitly defined in this paper, we refer to \cite{Shieldsbook96}.

We recall the definition of \emph{typical set} and, since we will make use of it in the following, we state Shannon's Theorem 3 in \cite{Shannon}, therein proved by Shannon for independent and identically distributed (i.i.d. in short) sources, 
as being presented in  \cite{Cover:2006}. McMillan in \cite{mcmillan53} called it the Asymptotic Equiripartition Property, or AEP  in short. For any set $A$, $|A|$ denotes the cardinality of $A$ and $\mathcal{X}^n$ is the set of all the sequences of length $n$ of elements in $\mathcal{X}$.

Now we give the definition of typical set and we state Shannon's Theorem 3 for the case of i.i.d. sources.

\begin{definition}\label{def:tipical}
The \emph{typical set} $A^{(n)}_\epsilon$ with respect to $p(x)$ is the set of sequences \linebreak
$(x_1, x_2, \ldots , x_n)\in\mathcal{X}^n$ with the property 
$$
2^{-n(H(X)+\epsilon)}\leq p(x_1, x_2, \ldots,x_n)\leq 2^{-n(H(X)-\epsilon)}.
$$
\end{definition}
\begin{theorem}{\cite[Theorem 3]{Shannon}}\label{the:sha3}
\begin{enumerate}
\item[1)] If $(x_1,\ldots,x_n)\in A^{(n)}_\epsilon$, then 
$$
H(X) - \epsilon \leq -\frac{1}{n}\log_2 p(x_1,\ldots,x_n) \leq H(X) + \epsilon.
$$
\item[2)] $\mathrm{Pr}\left\{A^{(n)}_\epsilon\right\} > 1 - \epsilon$ for $n$ sufficiently large.
\item[3)] $\left | A^{(n)}_\epsilon \right | \leq 2^{n(H(X) + \epsilon)}$ for $n$ sufficiently large.
\item[4)] $\left | A^{(n)}_\epsilon \right | \geq (1-\epsilon) 2^{n(H(X) - \epsilon)}$ for $n$ sufficiently large.
\end{enumerate}
\end{theorem}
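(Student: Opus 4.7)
The plan is to dispatch the four parts in order, since parts (1), (3), (4) are essentially bookkeeping from the definition of $A^{(n)}_\epsilon$, while part (2) is the only statement that needs a genuine probabilistic input, namely the weak law of large numbers.

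First I would handle part (1) immediately from Definition~\ref{def:tipical}: taking $-\frac{1}{n}\log_2$ of the double inequality $2^{-n(H(X)+\epsilon)}\leq p(x_1,\ldots,x_n)\leq 2^{-n(H(X)-\epsilon)}$ reverses the inequalities and yields the claim directly, with no further work needed.

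Next I would prove part (2), which is the analytic heart of the statement. The key observation is that for an i.i.d.\ source $(X_i)_{i\ge 1}$, the random variables $-\log_2 p(X_i)$ are themselves i.i.d., with common mean $H(X)$ and, under the implicit assumption that $\mathcal{X}$ is finite, finite variance. Hence the weak law of large numbers applies and gives
$$
-\frac{1}{n}\log_2 p(X_1,\ldots,X_n)=\frac{1}{n}\sum_{i=1}^{n}\bigl(-\log_2 p(X_i)\bigr)\;\longrightarrow\; H(X)
$$
in probability. Reading this convergence in probability with tolerance $\epsilon$ and probability mass $1-\epsilon$ is precisely the assertion that $\mathrm{Pr}\{A^{(n)}_\epsilon\}>1-\epsilon$ for all $n$ sufficiently large. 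This is the step I expect to be the main obstacle, in the sense that it is the only place where something non-trivial (a limit theorem) is invoked; everything else is just manipulation of the defining inequalities.

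Finally I would derive the cardinality bounds (3) and (4) by the standard two-sided probability estimate on $A^{(n)}_\epsilon$. For (3), summing the lower bound from Definition~\ref{def:tipical} over the typical set gives
$$
1\;\ge\;\sum_{x\in A^{(n)}_\epsilon}p(x)\;\ge\;|A^{(n)}_\epsilon|\cdot 2^{-n(H(X)+\epsilon)},
$$
so $|A^{(n)}_\epsilon|\le 2^{n(H(X)+\epsilon)}$, valid in fact for every $n$. For (4), summing the upper bound and invoking part (2) gives, for $n$ large enough,
$$
1-\epsilon\;<\;\mathrm{Pr}\{A^{(n)}_\epsilon\}\;=\;\sum_{x\in A^{(n)}_\epsilon}p(x)\;\le\;|A^{(n)}_\epsilon|\cdot 2^{-n(H(X)-\epsilon)},
$$
which rearranges to $|A^{(n)}_\epsilon|\ge (1-\epsilon)2^{n(H(X)-\epsilon)}$, completing the argument.
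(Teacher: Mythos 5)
Your proposal is correct and is exactly the standard argument the paper relies on: the paper does not prove Theorem~\ref{the:sha3} itself but cites it from Shannon/Cover--Thomas, noting that parts 1) and 2) restate Shannon's Theorem 3 (proved via the weak law of large numbers for the i.i.d.\ variables $-\log_2 p(X_i)$) and that parts 3) and 4) are direct consequences of 1) and 2), which is precisely your decomposition. No gaps.
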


It is worth noticing that points 1) and 2) of the previous theorem represent a statement of~\cite[Theorem 3]{Shannon}, which will be exactly stated in the historical section, Section~\ref{sec:sur}, and points 3) and 4) are direct consequences of points 1) and 2).

Shannon's theorem is proved in~\cite{Shannon} for convergence in probability: in the body of the paper only for i.i.d. sources, and in Appendix 3 only for Markov-like sources that are ergodic and stationary. Later McMillan~\cite{mcmillan53} stated and proved  the \emph{entropy 
theorem} for stationary (not necessarily ergodic) processes with mean $L_1$-convergence. Finally, Breiman~\cite{breiman57,breiman60} proved the same result for stationary and ergodic processes and finite alphabets in the case of almost surely convergence; this latter result is extended by Chung \cite{Chung61,Chung62} to countably infinite alphabets. As mentioned above, these statements of the entropy 
theorems for different types of convergece are called \emph{Shannon-McMillan-Breiman Theorem} or the entropy Theorem as in \cite{Shieldsbook96}.

The following statement is the  entropy theorem in its almost surely version, or pointwise version, of Theorem~\ref{the:sha3} as being presented in \cite[Theorem I.7.1]{Shieldsbook96}.

\begin{theorem}\label{the:AEPalmost}
For any stationary and ergodic source and for each $\epsilon>0$,  $(x_1,\ldots,x_n)\in A^{(n)}_\epsilon$ eventually almost surely.
\end{theorem}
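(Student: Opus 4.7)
The plan is to reformulate the typical-set membership as a statement about the information quantity $-\tfrac{1}{n}\log_2 p(x_1,\ldots,x_n)$ and then to invoke the pointwise form of the entropy theorem directly.

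First I would observe that item 1) of Theorem~\ref{the:sha3} shows that the condition $(x_1,\ldots,x_n)\in A^{(n)}_\epsilon$ is logically equivalent to the two-sided estimate
$$\left| -\tfrac{1}{n}\log_2 p(x_1,\ldots,x_n) - H(X) \right| \leq \epsilon.$$
Hence the claim ``$(x_1,\ldots,x_n)\in A^{(n)}_\epsilon$ eventually almost surely'' is nothing other than the assertion that, for almost every realization of the source, the numerical sequence $-\tfrac{1}{n}\log_2 p(x_1,\ldots,x_n)$ lies in an $\epsilon$-neighborhood of $H(X)$ for every sufficiently large $n$.

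Second, I would invoke the pointwise entropy theorem, proved by Breiman \cite{breiman57,breiman60} in the finite-alphabet case and extended by Chung \cite{Chung61,Chung62} to countable alphabets. For any stationary and ergodic source, this theorem states that
$$-\tfrac{1}{n}\log_2 p(x_1,\ldots,x_n) \longrightarrow H(X) \quad \text{almost surely.}$$
By the very definition of almost sure convergence, for each fixed $\epsilon>0$ there is a set of full measure on which, eventually in $n$, the quantity above falls within $\epsilon$ of $H(X)$.

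Combining the two steps yields the theorem. In practice no new work is required on our part: the conclusion is just the Shannon-McMillan-Breiman theorem translated into the typical-set language of Definition~\ref{def:tipical}. The genuine obstacle in the whole story is the pointwise entropy theorem itself, whose classical proof (via the Birkhoff ergodic theorem applied to the conditional information function, together with a martingale convergence argument for the backward filtration $-\log_2 p(x_0 \mid x_{-1},\ldots,x_{-n})$) is far from elementary; but it is precisely the deep ingredient that the present paper explicitly takes as a starting hypothesis, so here it suffices to cite it and unwind the definition.
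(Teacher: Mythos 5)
Your proposal is correct and matches what the paper itself does: the paper does not prove Theorem~\ref{the:AEPalmost} but presents it as the pointwise Shannon--McMillan--Breiman theorem quoted from Shields, exactly the deep ingredient you cite, and the only content to supply is the definitional unwinding (taking $\log_2$ in Definition~\ref{def:tipical} to see that membership in $A^{(n)}_\epsilon$ is the two-sided $\epsilon$-estimate on $-\tfrac{1}{n}\log_2 p(x_1,\ldots,x_n)$), which you carry out correctly. The only cosmetic remark is that this equivalence follows directly from Definition~\ref{def:tipical} rather than from item 1) of Theorem~\ref{the:sha3}, which states only one implication.
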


Clearly Theorem \ref{the:AEPalmost} implies Theorem \ref{the:sha3}.

\smallskip

 For sake of completeness anyhow we report from \cite{Shieldsbook96} the following definition and the Borel-Cantelli Lemma (cf. \cite[Lemma I.1.14]{Shieldsbook96}).

\begin{definition}
    A property $P$ is said to be \emph{measurable} if the set of all $\bf{x}$ for which $P(\bf{x})$ is true is a measurable set. If $\{P_n\}$ is a sequence of measurable properties then
    $P_n(\bf{x})$ holds \emph{eventually almost surely}, if for almost every $\bf{x}$ there is an $N=N(\bf{x})$ such that $P_n(\bf{x})$ is true for $n \geq N$.

\end{definition}

In Theorem \ref{the:AEPalmost} the property $P$ of ${\bf{x}}= (x_1,\ldots,x_n)$ is the membership in $A^{(n)}_{\epsilon}$.

\begin{lemma}[Borel-Cantelli]\label{le:borel-cantelli}
If $\{C_n \}$ is a sequence of measurable sets in a probability space $(X, \Sigma, \mu)$
such that $\sum \mu(C_n) < \infty$ then for almost every $x$ there is an $N=N(x)$ such that
$x \not\in {C_n}$ for $n \geq N$.
\end{lemma}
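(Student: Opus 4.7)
The plan is to reduce the statement to showing that the ``bad'' set of $x$ that belong to infinitely many $C_n$ has measure zero. To this end, introduce
\[
B \;:=\; \bigcap_{N \geq 1} \bigcup_{n \geq N} C_n,
\]
the $\limsup$ of the sequence $\{C_n\}$. By construction, $x \in B$ if and only if $x$ lies in $C_n$ for arbitrarily large $n$; equivalently, the complement of $B$ is exactly the set of $x$ for which some $N=N(x)$ exists with $x \notin C_n$ for all $n \geq N$. Hence the conclusion of the lemma is equivalent to $\mu(B)=0$.

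The bound on $\mu(B)$ then comes from countable subadditivity together with the convergence of the given series. Since $B \subseteq \bigcup_{n \geq N} C_n$ for every $N$, one has
\[
\mu(B) \;\leq\; \mu\!\left(\bigcup_{n \geq N} C_n\right) \;\leq\; \sum_{n \geq N} \mu(C_n).
\]
The hypothesis $\sum_{n=1}^{\infty} \mu(C_n) < \infty$ forces the tail $\sum_{n \geq N} \mu(C_n)$ to tend to $0$ as $N \to \infty$. Taking the limit on the right-hand side yields $\mu(B)=0$, which is exactly what was needed.

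There is no real obstacle here: the argument is routine measure theory. The only minor checks are that each $\bigcup_{n \geq N} C_n$ is measurable (immediate, as a countable union of measurable sets) and that the countable subadditivity step is legitimately applied. Notice that no independence assumption on the $C_n$ is required; independence would be needed only for the converse (second) Borel-Cantelli lemma, which plays no role in the sequel.
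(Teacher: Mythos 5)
Your proof is correct and complete: the reduction to showing $\mu\bigl(\bigcap_{N}\bigcup_{n\geq N}C_n\bigr)=0$ followed by countable subadditivity and the vanishing of the tail of a convergent series is the standard argument for the first Borel--Cantelli lemma. The paper itself does not prove this statement but only quotes it from Shields's book (Lemma I.1.14 there), and your argument is exactly the canonical one found in that and any measure-theory reference, so there is nothing to add.
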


We now give the main definitions of this note. 

\smallskip

\noindent First, we recall that, given a finite set of symbols (or words) $\rchi$, the set $\rchi^{*}$ is the set of all sequences of of symbols (or words) of any length  $\rchi$, i.e. $\rchi^{*}= \bigcup_{n=0}^{+\infty} \rchi^{n}$.

\begin{definition}
	\label{compressor}
\begin{enumerate}
    \item[1)]  A binary \emph{faithful-code sequence} or \emph{one-to-one code sequence} is any function $\gamma$ from $\rchi^{*}$ to $\{0,1\}^{*}$ such that for any integer  $n$ its restriction to $\rchi^{n}$ is injective.
    \item[2)] A binary \emph{prefix-code sequence} is a one-to-one code sequence $\gamma$ such that for any integer  $n$ its restriction to $\rchi^{n}$ is a prefix code.
    \item[3)]	A binary \emph{lossless compressor} is any injective function $\gamma$ from $\rchi^{*}$ to $\{0,1\}^{*}$. 
\end{enumerate}

\end{definition}

Notice that  the notation  \emph{faithful-code sequence} in part 1) of Definition \ref{compressor} is used in \cite{OrnsteinShields90} and in \cite{Shieldsbook96} and  \emph{one-to-one code sequence}   is used  in \cite{blundo96} and in many other articles  (see \cite{kontoyiannis97,Kontoyiannis14}, references therein and citing articles). The same notion is also classically called  \emph{non-singular code sequence} (see for example \cite{Cover:2006} and again \cite{blundo96}).

For the sake of simplicity, we restrict our attention to binary codes. All the notations and results presented here can be extended to general finite alphabets, analogously as in Khinchin \cite{Khinchin1953}. 

\medskip

By definition the class of one-to-one code sequences strictly includes the class of prefix-code sequences \emph{and} the class of lossless compressors and, therefore, any result
that holds for the class of one-to-one code sequences holds also for the other two classes and it is, from a logical view point, a stronger result. Anyhow by adding some further reasoning and proofs it is sometimes possible to pass from a weaker to a stronger result. This has been done from an historical point of view and it is explained in this note in the following Subsection \ref{Subsec:Ksin}.

As the reader can notice, for us a lossless compressor is just an injective function, that grants for unique decodability. If the length of the message to encode is known to the decoder, also one-to-one code sequences grant for unique decodability.
Notice also that  a compression, in the usual meaning of this term, is not granted since a ``compressor", following above Definition \ref{compressor}, can even expand texts in average.

The paper is organized as follows. In Section~\ref{sec:OSin}, we derive from the Shannon-McMillan-Breiman Theorem an original proof of Ornstein-Shields inequality which we think to be simpler than the ones given in literature until now to our best knowledge.  Even if the inequality firstly stated by Khinchin~\cite{Khinchin1953}, called Khinchin inequality, follows from the Ornstein-Shields inequality, in Subsection~\ref{Subsec:average} we give, starting from Shannon's Theorem 3 in \cite{Shannon}, an original and elementary proof of such inequality 
 that avoids the use of measure theory tools such as the Lemma of Borel-Cantelli and 
that can be used as an exercise for Information Theory students.

Section~\ref{sec:sur} is entirely dedicated to a historical survey and some technical observations of such inequality. Once given an overview on the proofs of Ornstein-Shields inequality found in literature, we show in Subsection~\ref{sec:dif} why we think our proof is simpler than the others. 

The last two sections are dedicated to the aknowledgements and to some memories of the third author concerning Professor Aldo de Luca.

 \section{Simple proofs}\label{sec:OSin}
  
 The following theorem is due to Ornstein and Shields in \cite{OrnsteinShields90} and we give a simpler original proof. It holds for any stationary and ergodic source of entropy H. 
 
 \smallskip
 
 To avoid confusion in notations we denote by $|| \cdot ||$ the length of a sequence of  symbols (or words)  instead of the more common $|\cdot |$ that is here used to denote the cardinality of a set.

 \begin{theorem}\label{the:Ornstein} For any one-to-one code sequence $\gamma$ almost surely,
$$ \liminf_{n\in \mathbb{N}}{\frac{||\gamma((x_1,\cdots,x_n))||}{n}}\geq H.$$ 
 \end{theorem}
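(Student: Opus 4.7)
The plan is to reduce Theorem~\ref{the:Ornstein} to a combination of the entropy theorem (Theorem~\ref{the:AEPalmost}), a simple counting argument, and the Borel--Cantelli lemma (Lemma~\ref{le:borel-cantelli}). The core observation is that the one-to-one property of $\gamma$ on $\mathcal{X}^n$ strongly limits how many source strings can be encoded into short binary strings, while membership in the typical set controls the probability of each individual source string; these two facts, pulling in opposite directions, will squeeze out the desired lower bound.

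First, fix $\delta>0$ and set $\epsilon := \delta/2$. Introduce the ``bad'' event
\[
B_n \;=\; \bigl\{\,(x_1,\ldots,x_n)\in\mathcal{X}^n \;:\; ||\gamma((x_1,\ldots,x_n))||\leq n(H-\delta)\bigr\}.
\]
Since the restriction of $\gamma$ to $\mathcal{X}^n$ is injective and there are at most $2^{n(H-\delta)+1}$ binary strings of length at most $n(H-\delta)$, counting gives the deterministic estimate $|B_n|\leq 2^{n(H-\delta)+1}$. By Definition~\ref{def:tipical}, every $x\in A^{(n)}_\epsilon$ satisfies $p(x)\leq 2^{-n(H-\epsilon)}$, whence
\[
\Pr\bigl(B_n\cap A^{(n)}_\epsilon\bigr) \;\leq\; |B_n|\cdot 2^{-n(H-\epsilon)} \;\leq\; 2\cdot 2^{-n(\delta-\epsilon)} \;=\; 2\cdot 2^{-n\delta/2},
\]
and this is summable in $n$.

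Applying Borel--Cantelli to the sets $C_n := B_n\cap A^{(n)}_\epsilon$ gives that almost surely $x\notin C_n$ for all sufficiently large $n$. Combined with the entropy theorem, which guarantees $x\in A^{(n)}_\epsilon$ eventually almost surely, this forces $x\notin B_n$ eventually almost surely, so that
\[
\liminf_{n\to\infty}\frac{||\gamma((x_1,\ldots,x_n))||}{n}\;\geq\; H-\delta \quad \text{almost surely.}
\]
To conclude, I would intersect the countably many full-measure events corresponding to $\delta = 1/k$ for $k\in\mathbb{N}$, obtaining the desired bound $\liminf \geq H$ almost surely. The main point to handle carefully is the book-keeping of null sets: avoiding an uncountable union over $\delta$ and checking measurability of the cylinder events $B_n$. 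Both are routine, and no ingredient beyond those already stated in the excerpt is needed; the real substance of the argument is the ``counting versus typicality'' estimate that bounds $\Pr(B_n\cap A^{(n)}_\epsilon)$ geometrically.
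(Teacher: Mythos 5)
Your proposal is correct and follows essentially the same route as the paper's own proof: intersect the set of sequences with short code length with the typical set, bound its probability by combining the injectivity counting bound with the pointwise upper bound $p(x)\leq 2^{-n(H-\epsilon)}$ from Definition~\ref{def:tipical}, apply Borel--Cantelli together with Theorem~\ref{the:AEPalmost}, and finally intersect over a countable family of thresholds. The only (harmless) difference is cosmetic: you phrase the length threshold directly as $n(H-\delta)$ rather than as $\log_2(|A_{\epsilon}^{(n)}|)-3\epsilon n-1$, which lets you skip the appeal to parts 3) and 4) of Theorem~\ref{the:sha3}.
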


\begin{proof} 
For any $\epsilon>0$ and for any $n$, define the sets $C_{\epsilon}^{(n)}$  
as
$$ C^{(n)}_\epsilon = \{(x_1,\ldots,x_n)\in  A^{(n)}_\epsilon  : ||\gamma(x_1,\ldots,x_n)|| \leq \log_2 (|A_{\epsilon}^{(n)}|) -3\epsilon n-1\}.$$

%

Since $\gamma$ is one-to-one, $|C^{(n)}_\epsilon|= |\gamma(C^{(n)}_\epsilon)|$. Since $\gamma(C^{(n)}_\epsilon)$ is a subset of $\{ 0, 1\}^*$, for any $t$ it must contain less than $2^{t+1}$ strings of length at most $t$. 
Thus $|C^{(n)}_\epsilon|\leq |A_{\epsilon}^{(n)}|2^{-3\epsilon n}.$

Since $C^{(n)}_\epsilon\subseteq A^{(n)}_\epsilon$, each element in $C^{(n)}_\epsilon$ has a bound on its probability given in Definition \ref{def:tipical}. Using it and the bound on $|A_{\epsilon}^{(n)}|$ given in part $3)$ of Theorem \ref{the:sha3}, we have that $P(C^{(n)}_\epsilon)\leq 
2^{-\epsilon n}$ for $n$ sufficiently large. For each fixed $\epsilon>0$ we  apply the Borel-Cantelli Lemma  to the sequence $C^{(n)}_\epsilon$ and, by Theorem \ref{the:AEPalmost}, eventually almost surely $(x_1,\ldots, x_n)$ belongs to $ A^{(n)}_\epsilon$ and not to $C^{(n)}_\epsilon$, i.e.  $||\gamma(x_1,\ldots,x_n)|| \geq \log_2 (|A_{\epsilon}^{(n)}|) -3\epsilon n$.

Using the bound on $|A_{\epsilon}^{(n)}|$ given in part 4) of Theorem \ref{the:sha3}, for each $\epsilon$,  almost surely 
$$
\begin{aligned}[t]
\liminf_ {n\in \mathbb{N}}\frac{||\gamma(x_1,\ldots, x_n)||}{n}& \geq  \liminf_ {n\in \mathbb{N}}\frac{\log_2 (|A_{\epsilon}^{(n)}|) -3\epsilon n }{n}
&\geq 
H(X) -4\epsilon,
\end{aligned}
$$ 
where the value $\frac{\log_2 (1-\epsilon)}{n}$ disappears in the $\liminf$ because $\epsilon$ is fixed and $\log_2 (1-\epsilon)$ is a constant.
Since this holds for any $\epsilon$,  and in particular for the enumerable sequence $\epsilon_m=\frac{1}{m}, m=1,\ldots,+\infty$,  
a simple exercise of measure theory completes the proof. 
\end{proof} 

\subsection{The average case}\label{Subsec:average}
Here we  give an elementary proof of the Khinchin result, i.e. that entropy is a lower bound for the average compression, without using the  Borel-Cantelli Lemma (Sec.~\ref{sec:intro}, Lemma~\ref{le:borel-cantelli}). We derive it directly from the Shannon Theorem (Sec. \ref{sec:intro}, Theorem \ref{the:sha3}) that is present in all textbooks of Information Theory (cf. for instance \cite{Cover:2006,yeung:2008,korner:2011,MacKay:2002}).

Clearly any average result follows from the analogous pointwise result and, in particular, the result of this subsection follows from Theorem \ref{the:Ornstein}. Anyway we have decided to keep both proofs, since the proof of the average result uses only elementary mathematical notions and it could be used as an exercise for Information Theory students analogously as in the case of Shannon's \cite[Theorem 4]{Shannon} (cf. \cite[Chapter 3, Exercise 11]{Cover:2006}). The average result implies as corollaries classical Information Theory results such as, for instance, the fact that the entropy is a lower bound of the average length  of  uniquely decodable block codes or a lower bound for the compression ratio of arithmetic compressors.

\medskip

Notice that we make use the same idea of the proof of Theorem \ref{the:Ornstein}.
and, indeed, the first six lines of both proofs are exactly the same.

\begin{theorem}\label{the:main2} For all i.i.d. source of entropy $H$ and any one-to-one code sequence $\gamma$ 
$$\liminf_{n \in \mathbb{N}}\frac{1}{n} \sum_{\pmb x \in \rchi^{n}} ||\gamma(\pmb x)|| \cdot p(\pmb x)\geq H.$$
\end{theorem}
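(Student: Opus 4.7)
The plan is to recycle the construction used in the proof of Theorem~\ref{the:Ornstein}, but to estimate the expected length directly rather than invoke Borel-Cantelli. Fix $\epsilon>0$ and form exactly the same set
$$C^{(n)}_\epsilon = \{\pmb x \in A^{(n)}_\epsilon : ||\gamma(\pmb x)|| \leq \log_2 |A^{(n)}_\epsilon| - 3\epsilon n - 1\}.$$
As in the previous proof, injectivity of $\gamma$ on $\rchi^n$ combined with the fact that there are fewer than $2^{t+1}$ binary strings of length at most $t$ gives $|C^{(n)}_\epsilon| \leq |A^{(n)}_\epsilon|\,2^{-3\epsilon n}$, and combining the upper bound on $p(\pmb x)$ valid on $A^{(n)}_\epsilon$ (Definition~\ref{def:tipical}) with part~3 of Theorem~\ref{the:sha3} yields $P(C^{(n)}_\epsilon)\leq 2^{-\epsilon n}$ for $n$ sufficiently large. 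These are the ``first six lines'' shared with the proof of Theorem~\ref{the:Ornstein}.

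The elementary step is to replace Borel-Cantelli by a direct expectation bound. Since $||\gamma(\pmb x)||\geq 0$, I restrict the sum to $B^{(n)}_\epsilon := A^{(n)}_\epsilon\setminus C^{(n)}_\epsilon$, on which by construction $||\gamma(\pmb x)|| \geq \log_2 |A^{(n)}_\epsilon| - 3\epsilon n$. Part~4 of Theorem~\ref{the:sha3} then bounds this from below by $n(H-\epsilon) + \log_2(1-\epsilon) - 3\epsilon n$ for $n$ large, which is positive once $\epsilon < H/4$. Simultaneously, by part~2 of Theorem~\ref{the:sha3},
$$P(B^{(n)}_\epsilon) \geq P(A^{(n)}_\epsilon) - P(C^{(n)}_\epsilon) \geq 1 - \epsilon - 2^{-\epsilon n} \geq 1 - 2\epsilon$$
for $n$ sufficiently large. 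Multiplying the pointwise lower bound on $||\gamma||$ by this probability and dividing by $n$ gives
$$\frac{1}{n}\sum_{\pmb x \in \rchi^{n}} ||\gamma(\pmb x)||\,p(\pmb x) \;\geq\; (1-2\epsilon)\left(H - 4\epsilon + \frac{\log_2(1-\epsilon)}{n}\right).$$

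Taking $\liminf_{n}$ on the right eliminates the $\log_2(1-\epsilon)/n$ term, since $\epsilon$ is fixed, leaving the bound $(1-2\epsilon)(H-4\epsilon)$. Since $\epsilon>0$ was arbitrary, I would let $\epsilon\to 0$ (through, e.g., $\epsilon_m = 1/m$) to conclude the theorem. I do not foresee a real obstacle: the whole argument stays inside the finite-$n$ framework of Shannon's Theorem~3, and the only subtlety is that the pointwise lower bound on $||\gamma||$ is valid only on $B^{(n)}_\epsilon$, which is fine because dropping the remaining nonnegative terms of the sum weakens the inequality in the desired direction. This is precisely what allows us to bypass Borel-Cantelli and keep the proof suitable for an Information Theory exercise.
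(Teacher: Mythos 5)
Your proposal is correct and follows essentially the same route as the paper's own proof: identical construction of $C^{(n)}_\epsilon$, restriction of the expectation to $A^{(n)}_\epsilon\setminus C^{(n)}_\epsilon$, and the product of the pointwise length bound with the probability bound from parts 2) and 4) of Theorem~\ref{the:sha3}, before letting $\epsilon\to 0$. The only (harmless) differences are bookkeeping of constants and your explicit check that the length lower bound is nonnegative before multiplying by the probability, a point the paper leaves implicit.
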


\begin{proof}

Proceed as in the proof of Theorem \ref{the:Ornstein} up to the fact that $P(C^{(n)}_\epsilon)\leq 2^{-\epsilon n}\leq \epsilon$ for $n$ sufficiently large. 

For any $n$, 
\begin{align*}
    \frac{1}{n} \sum_{\pmb x \in \rchi^{n}} ||\gamma(\pmb x)|| \cdot p(\pmb x)&\geq \frac{1}{n} \sum_{\pmb x \in A^{(n)}_\epsilon\setminus C^{(n)}_\epsilon} ||\gamma(\pmb x)|| \cdot p(\pmb x) \\
    &\geq \frac{\log_2 (|A_{\epsilon}^{(n)}|) -3\epsilon n}{n} \sum_{\pmb x \in A^{(n)}_\epsilon\setminus C^{(n)}_\epsilon}  p(\pmb x)\\
    &= [\frac{\log_2 (|A_{\epsilon}^{(n)}|)}{n} -3\epsilon][P(A_{\epsilon}^{(n)})- P(C_{\epsilon}^{(n)})].
\end{align*}

For $n$ sufficiently large,
 using part $4)$ of Theorem \ref{the:sha3}, one has  $\frac{\log_2 (|A_{\epsilon}^{(n)}|)}{n}\geq H-2\epsilon$. For $n$ sufficiently large,
 using part $2)$ of Theorem \ref{the:sha3} and previous obtained bound on $P(C_{\epsilon}^{(n)})$, one has  that for any $\epsilon < \frac{1}{2}$ 
 \begin{equation*}\label{equation:ave}
     \liminf_{n \in \mathbb{N}}\frac{1}{n} \sum_{\pmb x \in \rchi^{n}} ||\gamma(\pmb x)|| \cdot p(\pmb x)\geq (H-5\epsilon)(1-2\epsilon).
 \end{equation*}

Since this holds for any $\epsilon<\frac{1}{2}$, a simple exercise of calculus completes the proof.
\end{proof}

 From the previous proof we can derive, by Theorem \ref{the:sha3}, the desired average lower bound for compression. It is worth to highlight that the previous proof can be used to prove the same average lower bound also for stationary and ergodic sources, analogously as the original Khinchin proof, since Theorem~\ref{the:sha3} can be generalized to these kind of sources. Notice finally that the original Khinchin result was proved for stationary Markov chains.

\section{Historical and technical notes}\label{sec:sur}

\subsection{Historical survey}\label{subsec:sur}
Theorem \ref{the:Ornstein}  has been proved for the first time in \cite{OrnsteinShields90} in 1990 by Ornstein and Shields. More precisely it is part of their Theorem 1 in the invertible case, that is proved in \cite[Section 2]{OrnsteinShields90}.  Another simple proof is given in the  book of Shields \cite{Shieldsbook96} in 1996, together with a second  proof that is similar to the original proof given in
 \cite[Section 2]{OrnsteinShields90} and does not make use of the entropy theorem. The inequality  is stated in Theorem $II.1.2$ of \cite{Shieldsbook96} and the two proofs are given respectively in Subsection $II.1.b$ and in Subsection $II.1.c$. A third proof is given in \cite{Kontoyiannis14} in 2014 by Kontoyiannis  and Verd\`u in part $ii)$ of their Theorem $12$. Those three above are all the  proofs of Theorem \ref{the:Ornstein} that are present in the literature to our best knowledge. 

As noticed by Kontoyiannis and Verd\`u in \cite{Kontoyiannis14} before their Theorem 12, the weaker  corresponding result for prefix code sequences instead than for one-to-one sequences was established in \cite{kontoyiannis97}, \cite{Barronthesis} and  \cite{kieffer91}. More precisely in \cite{Barronthesis}, that is the 1985 Ph.D. thesis of Barron, it is proved a lemma that has as  an easy consequence the analogous of Theorem \ref{the:Ornstein} for prefix code sequences. Kontoyiannis uses it in his $1997$ paper \cite{kontoyiannis97}  and claims that: ``It is an unpublished result that appeared in \cite{Barronthesis}, and also, in a more general form in \cite{Algoetthesis}", and then he gives a proof of it in the appendix. Indeed, it seems a bit earlier, Barron's Lemma was stated, again as unpublished result of Barron, and proved in \cite{Shieldsbook96}. It is worth noticing that 
the Barron's Ph.D. thesis \cite{Barronthesis} and the Algoet's Ph.D. thesis \cite{Algoetthesis} are of the same year, both from Stanford University and with the same supervisor, that is, Professor Thomas M. Cover. 

We further notice here that the weaker corresponding result for lossless compressors was stated and proved for the first time by  Khinchin in  \cite{Khinchin1953} in 1953 (see also \cite{Khinchin1957}). It is a weaker result not just for  the fact that it is proved only for lossless compressors, but also because it is a result that is stated  ``in average"  instead of being pointwise. Notice that none of previous cited paper that we examined cites, in turn,  Khinchin. 
Also for this reason, we consider this historical section as one of the main contribution of this paper.

It is worth highlighting that, due to the analogy with the Shannon-McMillan-Breiman Theorem, we decided to call Theorem \ref{the:Ornstein} the ``Khinchin-Ornsten-Shields inequality" even if some credits have to be given to Imre Csisz\`ar as it will be explained in the following Subsection \ref{Subsec:Ksin}.

\bigskip

Khinchin's result comes from the golden age of the beginning of Information Theory, Indeed Khinchin relates it to one of the first theorems of Shannon's seminal paper \cite{Shannon}. Indeed the following theorem   is the seminal version of the Shannon-McMillan-Breiman Theorem, whilst Khinchin obtains his inequality starting from the subsequent Shannon's \cite[Theorem 4]{Shannon}. Both theorems are often present in textbooks such as in \cite[Chapter 3]{Cover:2006}.

\begin{theorem}{\cite[Theorem 3]{Shannon}}\label{th:shaoriginal}
Given any $\epsilon>0$ and $\delta>0$, we can find $N_0$ such that the sequences of any lenght $N\geq N_0$ fall into two classes:
\begin{enumerate}
    \item A set whose total probability is less than $\epsilon$
    \item The remainder, all of whose members have probabilities satisfying the inequality
    \[
    \bigl|\frac{\log p^{-1}}{N}- H\bigr| <\delta.
    \]
\end{enumerate}
\end{theorem}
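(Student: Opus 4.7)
The plan is to derive this theorem directly from the weak law of large numbers, exploiting the i.i.d. structure as Shannon did in the body of his paper. Observe that for an i.i.d. source we have the factorization $p(x_1,\ldots,x_N)=\prod_{i=1}^N p(x_i)$, so
\[
-\frac{1}{N}\log p(x_1,\ldots,x_N) \;=\; \frac{1}{N}\sum_{i=1}^N \bigl(-\log p(x_i)\bigr)
\]
is the empirical mean of the i.i.d.\ random variables $Y_i := -\log p(X_i)$, whose common expectation is $\mathbb{E}[Y_1]=-\sum_{x\in\mathcal{X}} p(x)\log p(x)=H$. Over a finite alphabet the variance of $Y_1$ is automatically finite, so the weak law of large numbers applies and yields
\[
\Pr\!\Bigl(\bigl|\tfrac{\log p^{-1}}{N}-H\bigr|\geq \delta\Bigr) \;\longrightarrow\; 0 \qquad \text{as } N\to\infty.
\]

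Given $\epsilon>0$ and $\delta>0$, I would then choose $N_0$ so that for every $N\geq N_0$ the left-hand probability above is smaller than $\epsilon$. The partition of $\mathcal{X}^N$ into the two required classes is now forced by the very definition of the typical set: Class 1 is the set of sequences for which $|\tfrac{\log p^{-1}}{N}-H|\geq \delta$, whose total probability is less than $\epsilon$ by the WLLN bound, and Class 2 is its complement, whose members trivially satisfy the stated inequality. This is exactly the content of parts 1) and 2) of Theorem~\ref{the:sha3}, rephrased in Shannon's original ``two-class'' language.

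The main obstacle lies not in the i.i.d.\ case treated above, which is a textbook computation, but in Shannon's extension to Markov-like sources in Appendix~3 of~\cite{Shannon}: there the summands $-\log p(X_i\mid X_{i-1})$ are no longer independent, and convergence in probability of the empirical entropy has to be obtained from the ergodic behaviour of the chain rather than from a direct variance estimate. This is the seed of what later became the Shannon--McMillan--Breiman theorem, and the full stationary-ergodic version (Theorem~\ref{the:AEPalmost}) requires genuine ergodic-theoretic machinery. Since the statement above is the i.i.d.\ version appearing in the body of Shannon's paper, the WLLN argument sketched in the first two paragraphs suffices.
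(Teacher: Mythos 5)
Your WLLN argument is correct and is essentially the classical proof of this statement for i.i.d.\ sources: the paper does not reprove the theorem but cites Shannon's original argument, which proceeds exactly this way (rewrite $-\frac{1}{N}\log p$ as an empirical mean of $-\log p(X_i)$ and apply the law of large numbers), and this is also the presentation in the textbook source the paper follows for Theorem~\ref{the:sha3}. Your closing remark that the genuine difficulty lies in the Markov/ergodic extension of Shannon's Appendix~3, the seed of the Shannon--McMillan--Breiman theorem, likewise matches the paper's own historical discussion.
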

\noindent Notice that Theorem~\ref{the:sha3} is the same statement of the previous theorem considering $\delta=\epsilon$. 

Theorem~\ref{th:shaoriginal} is proved in~\cite{Shannon} for convergence in probability: in the body of the paper only for i.i.d. sources, and in Appendix 3 only for Markov-like sources that are ergodic and stationary. Later McMillan~\cite{mcmillan53}  showed for stationary (not necessarily ergodic) processes with mean $L_1$-convergence a simple corollary to the entropy ergodic theorem relating codebook sizes to the appropriate entropies, using measure theory and Martingale theory. Finally, Breiman~\cite{breiman57,breiman60} proved the same result for stationary and ergodic processes and finite alphabets in the case of almost everywhere convergence,  using again the Martingale theory. The Breiman's result is extended by Chung \cite{Chung61,Chung62} to countably infinite alphabets.

As already mentioned, these three statements of the entropy ergodic theorems for different kynds of convergece are called Shannon-McMillan-Breiman Theorem.

 \bigskip

 There is a wide literature on developing simple proofs for the entropy ergodic theorems and their generalizations. Usually these proofs are not short but often they use fairly elementary mathematics. Just to give an example of a simple proof of a generalization of the Shannon-McMillan theorem, we may cite a result proved by Kieffer in~\cite{kieffer1974}, where it is provided a short proof of the most general then-known related result which did not involve Martingale theory using an extension of a proof presented in~\cite[Theorem 3.5.3]{Gallager1968}. These results successfully tackled  the overall problem for the $L_1$ convergence case.

However, it is far beyond the scope of this section to give an historical survey of the Shannon-McMillan-Breiman Theorem and of its generalizations, extensions and consequences, but the interested reader can see also \cite{AlgoetCoverSandwich,Barronarticle,QuantumMcMillan} and citing articles.

In general, it is far beyond the scope of this paper to give a simple proof of the entropy theorem even if, as noticed by Shields in \cite{Shieldsbook96} and reported by us in  Section \ref{sec:intro}, the Ornstein-Shields inequality is deeply connected to the entropy theorem. We want to give a new simple proof of the Ornstein-Shields inequality that we believe to be simpler of the other existing proofs of the same kind which make use of the entropy theorem, as we will discuss in next Subsection \ref{sec:dif}.

\bigskip

Coming back to Shannon's fourth theorem, consider
for any $n$ the sequences of length $n$ to be arranged in order of  decreasing probability. For any $q$, $0<q<1$, define $n(q)$ to be the number we must take from this ordered set, starting with the most probable one, in order to accumulate a total probability $q$ for those taken. \cite[Theorem 4]{Shannon} states that $\lim_{n\to\infty} \frac{\log_2 n(q)}{n}=H.$

Shannon, after stating his Theorem 4, claims, without proving anything, that: ``We may interpret $\log_2 (n(q))$ as the number of bits required to specify the sequence when we consider only the most probable sequences with a total probability $q$. Then $\frac{\log_2 n(q)}{n}=H$ is the number of bits per symbol for the specification. The theorem say that for large $n$ this will be independent of $q$ and equal to $H$".

This Shannon's interpretation is correct and starting from this seminal claim several formal consequences have been proved  that usually concern compressors that code blocks of uniform length $n$ into blocks of uniform length $k$ (see for instance \cite{korner:2011,sgarro79}).


Instead, following above Shannon's interpretation as a research direction and indeed exploiting \cite[Theorem 4]{Shannon}, Khinchin in  \cite{Khinchin1953} 
gives a formal definition of the \emph{average compression} for sequences of fixed length and of the \emph{compression coefficient} as $\limsup$ of the average compression. Then he proves in \cite[Theorem 4]{Khinchin1953}, for the first time to our best knowledge, that the entropy is a lower bound for the  compression coefficient of any injective function. Khinchin's proof works also when  $\liminf$ is used in the place of $\limsup$ in the definition of average compression, as we did in Theorem \ref{the:main2} . A $\liminf$  gives indeed a stronger result from a logical point of view.

This  Khinchin's inequality is also proved in another context, i.e. in the case of linguistic sources, by Hansel-Perrin-Simon in \cite{Hansel1992CompressionAE}.

\subsection{From weaker to stronger inequality}\label{Subsec:Ksin}

In Section $2$ of the original $1990$ paper of Ornstein and Shields \cite{OrnsteinShields90} where it is given the first written statement and proof of Theorem \ref{the:Ornstein}, it is described a technique of transforming one-to-one code sequences into prefix-code sequences adding a ``small" overhead header. The description of this technique takes a good part of their Section $2$ and it is also described in the $1996$ book of Shields \cite{Shieldsbook96} in Subsection $I.7.d$; the ``small" overhead header is $O(\log(||\gamma(w)||))$   size for any $w\in\rchi^{*}$ and gives ``no change in asymptotic performance" as said in the book of Shields. 

There is no room in this short subsection to describe in details this technique that mainly consists in prepending to $\gamma(w)$ the Elias delta coding of $||w||$ \cite{elias75}. Here we want to notice that in \cite{OrnsteinShields90}
this technique is credited, in their Section $2$, to  Imre Csisz\`ar and, moreover, the authors  say in the acknowledgements: 
``We wish to give special thanks to Imre Csisz\`ar, who corrected several of our errors and made many suggestions for improvement of our discussion".

In the $1996$ book of Shields \cite{Shieldsbook96} the simpler proof among the two proof contained therein consists exactly in linking the technique of Csisz\`ar and the Barron's $1985$ result \cite{Barronthesis} and it will be discussed in Subsection \ref{sec:dif}.

\subsection{Analogies and differences between proofs}\label{sec:dif}

What is a ``simple" proof? Can we say that a simple and elegant proof that is given at the end of a mathematical book and that makes use of all previous results of that book is  really ``simple"?

We think that the second question has a ``no" as right answer  and we have no answer to first question. Maybe a good attempt is given by the Occam's priciple or ``razor" discussed also in Barron's Ph.D. thesis  \cite{Barronthesis}  that can give suggestions to decide when a proof is ``simpler" than another. In this subsection we analyze the three proofs that are known before our proof and compare them all. 

\bigskip

As we said above, in~\cite[Section II.1.b, Section II.1.c]{Shieldsbook96} are reported two proofs of Theorem~\ref{the:Ornstein}. One short and elegant and the second longer 
 ``which was
developed in \cite{OrnsteinShields90}, and does not make use of the Shannon-McMillan-Breiman Theorem $[\ldots]$" as Shields wrote.

Let us firstly discuss the second longer proof. It is more than four pages long in the book of Shields and also refers to some previous notations and results but it does not uses previous deep results. Even if it does not make use of the Shannon-McMillan-Breiman Theorem we cannot consider it simpler than our proof. We think that it is not possible to claim that one of these two proofs is simpler than the other.  We emphatize that the beauty of this long proof resides also in the fact that just after it, in the book of Shields, it is given a short proof that logically derives the Shannon-McMillan-Breiman Theorem from Theorem \ref{the:Ornstein} and from the converse inequality stated in Theorem $II.1.1$ of \cite{Shieldsbook96}. This reasoning shows the generality and the logical   power of Theorem \ref{the:Ornstein}.

 \smallskip
 
 Let us now analyze the three remaining proofs: the first elegant proof written in the book of Shieds in $1996$, the $2014$  Kontoyiannis and Verd\`u's proof and our proof of Theorem \ref{the:Ornstein}. All three use the Borel-Cantelli Lemma (Lemma~\ref{le:borel-cantelli}) and the Shannon-McMillan-Breiman Theorem.
 
 Concerning the first elegant proof in \cite{Shieldsbook96}, 
 Shields claims that Theorem \ref{the:Ornstein} ``will follow from the entropy theorem, together
with a surprisingly simple lower bound on prefix-code word length."
 This proof indeed 
 makes use of several components:
 \begin{itemize}
     \item[$1)$] For any $n$, there is a conversion of one-to-one code into a prefix code ``with no change in asymptotic performance"  
     by using the 
Imre Csisz\`ar  technique (cf. Subsection \ref{Subsec:Ksin}).

\item[$2)$] The use of a Barron's lemma proved in \cite{Barronthesis} (see also \cite[Lemma II.1.3]{Shieldsbook96}) that, in turns uses \begin{itemize}
    \item[$2.a)$]the  Kraft inequality for prefix codes \cite{kraft1949device};
    \item[$2.b)$] The Borel-Cantelli Lemma (Lemma~\ref{le:borel-cantelli}).
\end{itemize}
\item[$3)$] The Shannon-McMillan-Breiman Theorem (Theorem~\ref{the:AEPalmost}).

 \end{itemize}

We want to emphasize here that our simple proof of Theorem \ref{the:Ornstein} makes use only of above points $2.b)$ and $3)$ analogously as the Kontoyiannis and Verd\`u's proof and it is overall shorter even if the length of the proof of the Kraft inequality is not considered. Therefore we think that our proof is simpler than the first of the two proofs contained in the book of Shields. A last argument in favour of our thinking is explained in the last part of this subsection.

\smallskip 

Let us now examine the proof of Kontoyiannis and Verd\`u of part $ii)$ of \cite[Theorem $12$]{Kontoyiannis14}, that is exactly our Theorem \ref{the:Ornstein}. Their proof makes use in turns of \cite[Theorem $11$]{Kontoyiannis14} that, again in turns, uses \cite[Theorem $5$]{Kontoyiannis14} that ``is a natural analog of the corresponding converse established for prefix compressors in \cite{Barronthesis}'' by Barron, as authors say. Their elegant proof of \cite[Theorem $5$]{Kontoyiannis14} uses a counting argument  to generalize the Barron's lemma and avoids the use of the Kraft inequality.
The resulting proof of part $ii)$ of \cite[Theorem $12$]{Kontoyiannis14}, even including  all these backpointers to previous theorems and their proofs, turns out to be overall just a bit longer than  the elegant proof reported in the Shield's book \cite{Shieldsbook96} but at least it does not make use of Kraft inequality.

\medskip

As final argument we notice that 
all above two proofs~\cite[Section II.1.b]{Shieldsbook96} and~\cite[Theorem 12]{Kontoyiannis14}, but not our proof, 
in the first part  show that the probability of the set of sequences of 
length $n$ that  have a ``small" compressed length is summable in $n$, and this  allows the use of Borel-Cantelli Lemma. Then such proofs use the Shannon-McMillan-Breiman to obtain the result. Somehow this procedure is analog to what Khinchin  does in the average case by exploiting \cite[Theorem 4]{Shannon}.

Our proof  in the first part shows instead  that the probability of the set of sequences of length $n$ that \emph{1)} have a small compressed length, \emph{and 2)} are typical, is summable  in $n$ and this fact allows us to use the bounds on the probability of each element in the typical set and, consequently, to simplify the proof.

Clearly we prove a weaker result in a simpler way but this weaker result still allows us to obtain the desired pointwise inequality. 
We think that maybe the Shannon's interpretation reported at the end of  Subsection \ref{subsec:sur} directed the other two proofs along the lines of a stroger result, lines that were also followed by Khinchin in the proof of his average result.

\section{Conclusion}
In this paper we derived from the entropy theorem  for i.i.d. sources in the case of convergence in probability, i.e. the Shannon's Theorem 3 in \cite{Shannon}, and from the more general entropy theorem for stationary and ergodic processes in the case of almost surely convergence, also called McMillan-Breiman Theorem \cite{mcmillan53,breiman57,breiman60}, respectively an elementary proof for the Khinchin inequality and a simple proof  for the Ornstein-Shields inequality. In particular, we use a deeper classical result in order to prove the latter inequality which is the almost-sure version of the first one, that is an average inequality.

 \section{In memoriam of Professor Aldo de Luca}

The third author remembers the discussions and explanations of Aldo
de Luca given to him around thirty years ago while walking in Boulevard Saint Michel in Paris. Aldo was very fond of Khinchin's formalization effort and, indeed, in his research paper \cite{deLuca} he uses the Khinchin's notation ``standard sequences" reported in the English translation of Khinchin's work, instead of the more common ``typical sequences". 

Aldo's voice 
had a seducing sound, similar to the sound of a father reading a beautiful fairy tale to his sons, or an history of brave knights fighting for honour and mathematical rigour.
 
 \section{Acknowledgements} The authors thank Professor D. Perrin for pointing out reference \cite{Shieldsbook96} during a conference in Rome, July 11-12 2019, in memoriam of Professor de Luca, where they presented a preliminary version of above results. The authors are also grateful to the referees for their suggestions.

\nocite{*}
\bibliography{bibliografia} 
\bibliographystyle{ieeetr}
 
\end{document}